\newtheorem{Corollary}{Corollary}
\newtheorem{thm}{Theorem}
\newtheorem{lemma}{Lemma}
\journal{Reports on Mathematical Physics}
\begin{document}

\begin{frontmatter}

\title{Construction of infinitely many models of the universe on the Riemann hypothesis}

\author{Namrata Shukla}
\address{Harish-chandra Research Institute,\\ 
Chhatnag Road, Jhunsi,  Allahabad, 211 019,  India.}


\ead{namratashukla@hri.res.in}

\begin{abstract}The aim of this note is to remove an implausible assumption in Moser's theorem \cite{JM} to establish our new theorem 1
which gives a lower estimate for the sum $p+c^2\rho$ on Riemann hypothesis. Corollary 1 gives a rather plausible construction 
of infinitely many models of universe with positive density($\rho$) and 
pressure($p$) since it makes use of the state equation in the form of an inequality.
\end{abstract}

\begin{keyword}
\texttt Riemann hypothesis\sep Riemann-Siegel function \sep State equation\sep Relativistic cosmology.
\MSC[2010] 11M26\sep 11Z05
\end{keyword}

\end{frontmatter}

\linenumbers

\section{Introduction}
Let  $\zeta(s)$ denote the Riemann zeta-function with  $s=\sigma+it$ a complex variable \cite{Ti}. It is known that $\zeta(s)$ has infinitely many zeros, denoted by $\mathcal{\bm \rho}=\beta+i\gamma$ in the critical strip $0<\sigma<1$ and a positive proportion of them lie on the critical line $\sigma=\frac{1}{2}$.
The Riemann hypothesis states that all the non-trivial zeros of $\zeta(s)$ lie on the critical line. We abbreviate in this paper, ``under the truth of the Riemann hypothesis'' to ``on the RH'', which means that all the non-trivial zeros are of the form $\mathcal{\bm \rho}=\frac{1}{2}+i\gamma$.\\

We introduce the standard notation on the RH:
\begin{equation}\label{eqn1moser}
\chi\left(\frac{1}{2}+it \right)=\pi^{it}\frac{\Gamma\left(\frac{1}{4}-i\frac{t}{2} \right)}{\Gamma\left(\frac{1}{4}+i\frac{t}{2} \right)},
\end{equation}
and 
\begin{equation}\label{eqn2moser}
\vartheta(t)=\frac{1}{2}\mathop{\rm \arg}\chi\left(\frac{1}{2}+it \right),
\end{equation}
where $\Gamma$ denotes the gamma function. Then, we can write
\begin{equation}\label{eqn3moser}
\zeta\left(\frac{1}{2}+it \right)=e^{-i\vartheta(t)}Z(t).
\end{equation}
Here $Z(t)$ is the so-called `Riemann-Siegel' function and it is known that $Z(t)$ is a real differentiable and hence a continuous function and so infinitesimal calculus is applicable to it. Let $0<\gamma'<\gamma''$ be ordinates of the two neighboring zeroes of the zeta function.
Then $Z(\gamma^\prime)=Z(\gamma'')=0$ and hence by {\it Rolle's theorem} it follows that there exits a  $t_0$ in the interval $(\gamma^\prime,\gamma'')$ such that  $Z^\prime(t_0)= 0$.

J. Moser \cite{JM} studied the behavior of the Riemann zeta function on the critical line and applied it to relativistic cosmology on the RH in the last portion of the paper 
in order to show the existence of infinitely many universes. While proving this fact Moser says that by the Littlewood-Titchmarsh $\Omega$- theorem \cite[p. 204]{Ti}, we have the following:
Let $\{ \tilde{t}_0\}$ be a subsequence of the sequence $\{ t_0\}$ such that
\begin{equation}
\left|\zeta\left(\frac{1}{2}+i{\tilde{t}}_0 \right)  \right|>\frac{1}{\tilde{t}_0^\alpha}, \quad  0<\alpha \le 1.
\end{equation}
or, it can also be expressed as 
\begin{equation}\label{eqn4moser}
|Z(\tilde{t}_0)|>A_2\exp\left(\log^\beta \tilde{t}_0  \right), \quad \beta<\frac{1}{2},
\end{equation}
where $A_2$ is a positive constant and subsequently we follow the notation of Moser to denote a positive constant with or without suffixes. 
However, from `Titchmarsh omega theorem' there is no guarantee that one can choose such a sequence $\{ \tilde{t}_0\}$ as a 
subsequence of the sequence $\{ t_0\}$ consisting of stationary points of $Z(s)$. 
Since his argument about construction of infinitely many universes ultimately depends on the existence of the sequence $\{ \tilde{t}_0 \}$ for which both
\begin{equation}\label{eqn5moser}
Z^\prime(\tilde{t}_0)=0
\end{equation}
and \eqref{eqn4moser} hold and there may not be such a sequence, it is not possible to confirm the truth of this result.

The aim of this note is to remove this unproven hypothesis and confirm the construction of infinitely many models of universe 
with postive density and pressure without a requirement of \eqref{eqn5moser} to be fulfilled.

\begin{thm}
There exists an infinite sequence $\{t_0\}$ such that

\begin{equation}\label{theorem}
(p+ c^2\rho)>A_1{(\log \log\log t_0)}^2,
\end{equation}
where $c$ is the speed of light in vacuum, $\rho$ and $p$ are the density of matter 
and the pressure as functions of time and are also denoted by $\rho(t)$ and $p(t)$, respectively. 
\end{thm}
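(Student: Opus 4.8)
The plan is to retain Moser's passage from the zeta-function to relativistic cosmology while replacing the single unsupported ingredient --- the bound \eqref{eqn4moser} holding at points satisfying \eqref{eqn5moser} --- by an unconditional lower bound for $|Z|$ that does not presuppose $Z'=0$. In Moser's model the physical quantity $p+c^2\rho$, read off at a time $t_0$, is bounded below in terms of the size of the Riemann--Siegel function there, by an expression of the shape $p+c^2\rho \gg (\log\log |Z(t_0)|)^2$. Granting this, the theorem reduces to producing an infinite sequence of $t_0$ along which $|Z(t_0)|$ grows at least like a fixed positive power of $\log t_0$; the triple logarithm in \eqref{theorem} is exactly what emerges after passing such power-of-log growth through $\log\log(\cdot)$, and the loss of one logarithm compared with Moser reflects our use of a far weaker, but genuinely provable, lower bound than \eqref{eqn4moser}.

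For the arithmetic heart I would use the classical mean square
\begin{equation}\label{eqn:secondmoment}
\int_0^T |Z(t)|^2\,dt = \int_0^T \left|\zeta\!\left(\tfrac12+it\right)\right|^2 dt \sim T\log T ,
\end{equation}
which is unconditional. If $M(T)=\max_{t\le T}|Z(t)|$, then $M(T)^2\,T \ge \int_0^T |Z|^2\,dt$, so \eqref{eqn:secondmoment} forces $M(T)\gg \sqrt{\log T}$. Choosing $T_n\to\infty$ and taking $t_0=t_0(T_n)\le T_n$ where the maximum is attained yields an infinite sequence with $|Z(t_0)|\gg\sqrt{\log t_0}$ and $|Z(t_0)|\to\infty$. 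Crucially, nothing here requires $Z'(t_0)=0$, so the hypothesis \eqref{eqn5moser} is dispensed with altogether; and should Moser's construction nonetheless prefer stationary points, one notes that $\frac{d}{dt}|Z|^2 = 2ZZ'$, whence every interior maximum of $|Z|$ satisfies $Z'=0$, so the $t_0$ selected above may be taken stationary at no extra cost.

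It remains to chase logarithms. From $|Z(t_0)|\gg\sqrt{\log t_0}$ one gets $\log|Z(t_0)|\ge \tfrac12\log\log t_0-O(1)$ and hence $\log\log|Z(t_0)|\ge \log\log\log t_0 - O(1)$, so that
\begin{equation}\label{eqn:finalbound}
p+c^2\rho \;\gg\; \bigl(\log\log|Z(t_0)|\bigr)^2 \;\ge\; A_1\,(\log\log\log t_0)^2
\end{equation}
for a suitable $A_1>0$ and all large members of the sequence, which is \eqref{theorem}. The role of the RH is confined to Moser's model, where it places every zero, and the machinery of $Z(t)$ describing it, on the critical line.

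The step I expect to be the genuine obstacle is not the analytic number theory, which is comparatively routine once \eqref{eqn:secondmoment} is identified as the right tool, but the faithful reconstruction of Moser's cosmological inequality: one must check that $p+c^2\rho$ is really controlled from below by $(\log\log|Z(t_0)|)^2$, that this control persists when \eqref{eqn4moser} is weakened to mere $\sqrt{\log t_0}$ growth, and that the inequality does not quietly reintroduce a dependence on \eqref{eqn5moser}. Verifying the positivity of $\rho$ and $p$ separately, as the corollary will need, is a further point demanding care.
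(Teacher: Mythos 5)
There is a genuine gap, and it sits exactly where you predicted it would: the inequality $p+c^2\rho \gg (\log\log|Z(t_0)|)^2$ on which your whole argument hangs is not what the model delivers, and nothing of that shape is ever established. What the postulate $R(t)=|Z(t)|$ actually yields, after combining the two Friedmann equations with the identity of Lemma \ref{lemma2.2} (itself a consequence of Riemann's explicit formula for $\zeta'/\zeta$), is
\begin{equation*}
\frac{\mathcal G}{2}\,(p+c^2\rho)=2\sum_{\gamma}\frac{1}{(t_0-\gamma)^2}-\frac{1}{c^2}\frac{1}{Z(t_0)^2}+O\!\left(\frac{1}{t_0}\right).
\end{equation*}
The triple logarithm does not come from taking $\log\log$ of $|Z(t_0)|$; it comes from Littlewood's gap estimate $\gamma''-\gamma'<A/\log\log\log\gamma'$ for consecutive ordinates (Lemma \ref{lemma2.3}), which forces $\sum_{\gamma}(t_0-\gamma)^{-2}>(\gamma''-\gamma')^{-2}>A_1(\log\log\log t_0)^2$. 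The size of $|Z(t_0)|$ enters only through the negative term $-c^{-2}Z(t_0)^{-2}$, which merely has to be shown to be negligible against $(\log\log\log t_0)^2$; the paper does this with the Titchmarsh $\Omega$-theorem. Without the gap estimate you have no source for the bound $(\log\log\log t_0)^2$ at all, and no amount of logarithm-chasing from $|Z(t_0)|\gg\sqrt{\log t_0}$ can manufacture it, because the premise that chain would feed into is unsupported and, structurally, wrong.

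That said, the number-theoretic half of your proposal is sound and would slot into the correct skeleton. Your unconditional second-moment argument produces a sequence with $|Z(t_0)|\gg\sqrt{\log t_0}$, hence $Z(t_0)^{-2}\ll 1/\log t_0$, which is more than enough to kill the negative term and is arguably cleaner than invoking the $\Omega$-theorem of Lemma \ref{lemma2.4}. Your observation that an interior maximum of $|Z|$ is a stationary point of $Z$ even recovers Moser's condition $Z'(t_0)=0$ for free --- the very condition the paper set out to avoid needing. The repair is therefore: keep your choice of $t_0$, but insert the displayed identity and Littlewood's estimate as the actual source of the lower bound. As written, the central inequality of your proof is asserted rather than derived, and it is not the one the model provides.
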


\begin{Corollary}\label{cor1.1moser}
If $p$ and $\rho$ are related by $\mathcal C_1\rho<p\leq \frac{c^2}{3}\rho$, $\mathcal C_1>0$ being a constant, then there exists a sequence $\{\delta({t_0})\}$ with $\delta({t}_0)>0$ such that
\begin{equation}\label{eqn6moser}
p(t)>A_1{(\log \log\log t_0)}^2>0, \quad t \in [{t}_0-\delta({t}_0)<{t}_0+\delta({t}_0)],
\end{equation}
{\it i.e.}, there exists an infinitely many models of the universe with positive pressure and density.\\
\end{Corollary}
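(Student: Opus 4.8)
The plan is to deduce the Corollary directly from Theorem~1, first converting the lower bound on $p+c^2\rho$ into a lower bound on the pressure alone at each point of the sequence, and then spreading this strict pointwise bound over a short time interval by continuity. I would fix one element $t_0$ of the infinite sequence produced by Theorem~1, taken large enough that $\log\log\log t_0>0$, so that
\begin{equation}
p(t_0)+c^2\rho(t_0)>A_1{(\log\log\log t_0)}^2>0.
\end{equation}
The only additional ingredient is the left-hand part of the hypothesised state equation, $\mathcal C_1\rho<p$, which bounds the density in terms of the pressure.

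The algebra is then elementary. From $\mathcal C_1\rho<p$ one has $c^2\rho<\frac{c^2}{\mathcal C_1}\,p$, whence
\begin{equation}
\left(1+\frac{c^2}{\mathcal C_1}\right)p(t_0)>p(t_0)+c^2\rho(t_0)>A_1{(\log\log\log t_0)}^2,
\end{equation}
so that $p(t_0)>\frac{A_1\mathcal C_1}{\mathcal C_1+c^2}{(\log\log\log t_0)}^2$. Absorbing the positive factor $\mathcal C_1/(\mathcal C_1+c^2)$ into the constant $A_1$, as permitted by the convention on constants adopted from Moser, this is precisely $p(t_0)>A_1{(\log\log\log t_0)}^2>0$. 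Positivity of the density then follows at no cost from the right-hand part $p\le\frac{c^2}{3}\rho$, which gives $\rho(t_0)\ge\frac{3}{c^2}p(t_0)>0$; thus at each $t_0$ both the pressure and the density are strictly positive.

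To pass from the single instant $t_0$ to the interval, I would use continuity. Since $Z(t)$ is continuous and differentiable (as recalled in the Introduction) and the cosmological quantities depend continuously on $t$, the function $p(t)$ is continuous; because the inequality $p(t_0)>A_1{(\log\log\log t_0)}^2$ is \emph{strict} and its right-hand side is a fixed number once $t_0$ is chosen, there exists $\delta(t_0)>0$ for which $p(t)>A_1{(\log\log\log t_0)}^2>0$ throughout $[t_0-\delta(t_0),\,t_0+\delta(t_0)]$. Carrying this out for each of the infinitely many $t_0$ yields infinitely many time intervals on which both pressure and density stay positive, i.e. infinitely many models of the universe. The only points requiring care, rather than the routine manipulation, are the \emph{strictness} of the bound inherited from Theorem~1 (without it no open neighbourhood is guaranteed) and the continuous dependence of $p$ and $\rho$ on $t$; both are in hand, the former from the $\Omega$-estimate underlying Theorem~1 and the latter from the continuity of the Riemann--Siegel function.
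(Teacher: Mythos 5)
Your argument is correct and is essentially the paper's own proof, which disposes of the Corollary in one clause (``by the inequality between $p$ and $\rho$, we conclude that $p(t)>0$'') and leaves to the reader exactly the algebra you supply, namely $\mathcal C_1\rho<p\Rightarrow\bigl(1+c^2/\mathcal C_1\bigr)p>p+c^2\rho$, the absorption of the factor $\mathcal C_1/(\mathcal C_1+c^2)$ into $A_1$, and the deduction $\rho\ge\frac{3}{c^2}p>0$ from the other half of the state inequality. The only cosmetic difference is that the paper already asserts the bound of Theorem~1 on a neighbourhood $[t_0-\delta(t_0),\,t_0+\delta(t_0)]$ in \eqref{eqnzeta2.17}, whereas you first prove the pointwise bound at $t_0$ and then spread it by continuity of $p$ near a point where $Z(t_0)\neq0$; the two routes are the same in substance.
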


The basis of relativistic cosmology is the following system of A. A. Friedman equations
\begin{equation}\label{eqn7moser}
\mathcal G c^2\rho =\frac{3}{R^2}\left(kc^2{+R^\prime}^2  \right),
\end{equation}

\begin{equation}\label{eqn8moser}
\mathcal G p =-\frac{2R''}{R}-\frac{{R^\prime}^2}{R^2}-\frac{kc^2}{R^2}
\end{equation}
where the cosmological constant is not being considered. $\mathcal G$ represents the gravitational constant and the curvature index $k$ can take the values amongst $(-1,0,1)$. 
The scale factor $R(t)$ denotes the radius of curvature of the universe for $k=1$, $p(t)$ and $\rho(t)$ are defined above and the prime indicates the derivative w.r.t. time.
Therefore, there are two equations for three unknown variables of time  $\rho(t), p(t)$ and $R(t)) $. In order to find the solution, we may eliminate one of them on
postulating the ``state equation" given by
 \begin{equation}\label{eqn9moser}
F(\rho, p)=0.
 \end{equation}
We usually study a model of the universe with zero pressure but the behavior of the universe is also studied in the case, {\it e.g.}, \cite[p. 387]{LL}
\begin{equation}\label{eqn10moser}
p=\frac{c^2\rho}{3},
\end{equation}
{\it i.e.}, in the case of the maximum pressure $p(t)$ under a given density $\rho(t)$. This restriction is natural from physics point of view. 
We also need essential physical restrictions to the density and pressure given by
\begin{equation}\label{eqn11moser}
\rho>0,  \quad  p\ge 0.
\end{equation}
Since $\rho>0$ is true if only $R(t)\ne 0$, we postulate the behavior of the radius of the universe so that the physical condition on pressure given by 
the second inequality in \eqref{eqn11moser} is satisfied in a certain interval of time. We construct (on the RH) an infinite set of models of 
the universe under the postulate on $R(t)$:

\begin{equation}\label{eqn2.16moser}
R(t)=|Z(t)|,  \quad  k=+1,
\end{equation}
and viewing the time variable $t$ as the imaginary part $t$ of the complex variable $s$. 

\section{Proof of the result}

\begin{lemma}\label{lemma2.1}
We have \cite[p. 308]{JM5} for $t\rightarrow\infty$
\begin{align}\label{eqn4JM}
\frac{\zeta^{''}\left(\frac{1}{2}+it \right)}{\zeta\left(\frac{1}{2}+it \right)}
=\sum_{\gamma}^{}\frac{1}{{(t-\gamma)}^2}+{\left\{\frac{ \zeta^\prime\left(\frac{1}{2}+it \right)}{\zeta\left(\frac{1}{2}+it \right)}\right\}}^2+O\left(\frac{1}{t} \right),
\end{align}
where the sum is over all ordinates of the non-trivial zeros of $\zeta(s)$ and $t$ does not coincide with any of them. This condition is valid throughout.
\end{lemma}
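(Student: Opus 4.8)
The plan is to reduce \eqref{eqn4JM} to the well-understood logarithmic derivative $\zeta'/\zeta$ together with its first derivative. The algebraic identity I would start from is
\[
\frac{\zeta''(s)}{\zeta(s)} = \left(\frac{\zeta'(s)}{\zeta(s)}\right)' + \left(\frac{\zeta'(s)}{\zeta(s)}\right)^2,
\]
which follows at once from the quotient rule and holds as an identity in $s$ wherever $\zeta(s)\neq 0$. This already isolates the squared-logarithmic-derivative term of \eqref{eqn4JM}, so the entire problem collapses to evaluating $(\zeta'/\zeta)'$ on the critical line.

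For $(\zeta'/\zeta)'$ I would invoke the Hadamard factorization of the completed zeta function $\xi(s)=\tfrac12 s(s-1)\pi^{-s/2}\Gamma(s/2)\zeta(s)$, which gives the partial-fraction expansion
\[
\frac{\zeta'(s)}{\zeta(s)} = B - \frac{1}{s} - \frac{1}{s-1} + \frac{1}{2}\log\pi - \frac{1}{2}\frac{\Gamma'}{\Gamma}\!\left(\frac{s}{2}\right) + \sum_{\bm\rho}\left(\frac{1}{s-\bm\rho}+\frac{1}{\bm\rho}\right),
\]
the sum running over the non-trivial zeros $\bm\rho=\beta+i\gamma$ and $B$ a constant. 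Differentiating once in $s$ annihilates $B$, the constant $\tfrac12\log\pi$, and every $1/\bm\rho$ term, leaving
\[
\left(\frac{\zeta'(s)}{\zeta(s)}\right)' = \frac{1}{s^2} + \frac{1}{(s-1)^2} - \frac{1}{4}\left(\frac{\Gamma'}{\Gamma}\right)'\!\left(\frac{s}{2}\right) - \sum_{\bm\rho}\frac{1}{(s-\bm\rho)^2}.
\]
The series $\sum_{\bm\rho}(s-\bm\rho)^{-2}$ converges absolutely, since the zero-counting function grows like $\tfrac{T}{2\pi}\log T$, so this term-by-term differentiation is legitimate.

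It remains to substitute $s=\tfrac12+it$ and read off the asymptotics as $t\to\infty$. On the RH every zero is $\bm\rho=\tfrac12+i\gamma$, so $s-\bm\rho=i(t-\gamma)$ and $(s-\bm\rho)^2=-(t-\gamma)^2$; hence $-\sum_{\bm\rho}(s-\bm\rho)^{-2}=\sum_{\gamma}(t-\gamma)^{-2}$, which reproduces the real, positive leading sum of \eqref{eqn4JM}. The remaining pieces are all $O(1/t)$: the elementary terms $s^{-2}$ and $(s-1)^{-2}$ are $O(1/t^2)$, while the trigamma contribution $-\tfrac14(\Gamma'/\Gamma)'(\tfrac14+\tfrac{it}{2})$ is $O(1/t)$ by the standard expansion $(\Gamma'/\Gamma)'(z)=z^{-1}+O(z^{-2})$ for large $|z|$. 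Feeding this back into the identity of the first paragraph yields exactly \eqref{eqn4JM}.

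The main obstacle I anticipate is bookkeeping rather than conceptual: making the error term $O(1/t)$ uniform requires controlling the gamma-factor asymptotics (Stirling for the trigamma function) and, more delicately, justifying the formal term-by-term differentiation of the only conditionally convergent expansion of $\zeta'/\zeta$. This is precisely why it is cleaner to differentiate first, since $\sum_{\bm\rho}(s-\bm\rho)^{-2}$ is absolutely convergent whereas $\sum_{\bm\rho}(s-\bm\rho)^{-1}$ is not; everything else is direct substitution and elementary estimation.
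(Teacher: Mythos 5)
Your proposal is correct and follows essentially the same route as the paper: the quotient-rule identity $\zeta''/\zeta=(\zeta'/\zeta)'+(\zeta'/\zeta)^2$, term-by-term differentiation of the Hadamard/Riemann partial-fraction expansion of $\zeta'/\zeta$, and then specialization to $s=\tfrac12+it$ on the RH with Stirling-type bounds for the remaining terms. If anything, your bookkeeping is the more careful one --- you correctly track that differentiating $\sum_{\bm\rho}(s-\bm\rho)^{-1}$ produces $-\sum_{\bm\rho}(s-\bm\rho)^{-2}$, which only becomes $+\sum_{\gamma}(t-\gamma)^{-2}$ after the RH substitution $(s-\bm\rho)^2=-(t-\gamma)^2$, whereas the paper's intermediate equation \eqref{eqnzeta2.16} carries a sign slip that is silently absorbed at that step, and you also bound the trigamma term explicitly where the paper addresses only the $(s-1)^{-2}$ term.
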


\begin{proof}
We apply the formula for {\it differentiation of the quotient} in the form
\begin{equation}\label{eqnzeta2.13}
{\left(\frac{f}{g}\right)}^\prime=\frac{f^\prime}{g}-\frac{fg^\prime}{g^2}.
\end{equation}
Riemann's {\it explicit formula} which is valid for all $s\ne \rho$ is
\begin{equation}\label{eqnzeta2.15}
\frac{\zeta^\prime}{\zeta}(s)=b-\frac{1}{s-1}-\frac{\rm d}{{\rm d}s}\log\Gamma\left(\frac{s}{2}+1 \right)
+\sum_{\rho}^{}\left(\frac{1}{s-\rho}+\frac{1}{\rho} \right),
\end{equation}
where $\rho$ runs through all non-trivial zeros of the Riemann zeta function with
\begin{equation}
\frac{\rm d}{{\rm d}s}\log\Gamma\left(s \right)=\log s-\frac{1}{2s}-2\int_{0}^{\infty}\frac{u}{(u^2+s^2)(e^{2\pi u}-1)}\, {\rm d}u,
\end{equation}
It is explicit because the logarithmic derivative of the Riemann zeta function is expressed explicitly 
in terms of the non-trivial zeros, while trivial zeros are contained in the second formula. Differentiating both sides of the first equality in \eqref{eqnzeta2.15} and using \eqref{eqnzeta2.13}, we deduce that
\begin{equation}\label{eqnzeta2.16}
\frac{\zeta''}{\zeta}(s)=\sum_{\rho}^{}\frac{1}{{(s-\rho)}^2}+{\Big\{\frac{\zeta^\prime}{\zeta}(s)\Big\}}^2+\frac{1}{{(s-1)}^2}-\frac{\rm d^2}{{\rm d}^2s}\log\Gamma\left(\frac{s}{2}+1 \right).
\end{equation}
The third term on the right can be estimated as $O\left(\frac{1}{|s|} \right)$. Hence, in particular, on the RH, we deduce \eqref{eqn4JM}.\\
\end{proof}

We may now give a simplified proof of Moser's main formula.
\begin{lemma}\label{lemma2.2}
On the RH we have \cite[p. 34]{JM}
\begin{equation}\label{eqn1JM}
\sum_{\gamma}^{}\frac{1}{{(t-\gamma)}^2}=-\frac{{\rm d}}{{\rm d}t}\left(\frac{Z^\prime(t)}{Z(t)} \right)+O\left(\frac{1}{t} \right).
\end{equation}
\end{lemma}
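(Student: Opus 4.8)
The plan is to feed the factorization \eqref{eqn3moser}, namely $\zeta\left(\frac12+it\right)=e^{-i\vartheta(t)}Z(t)$, into the identity of Lemma \ref{lemma2.1} and let the bracketed square $(\zeta'/\zeta)^2$ cancel against an identical term generated by the factorization. The only genuine subtlety is bookkeeping of the two notions of derivative: in \eqref{eqn4JM} the primes on $\zeta$ are derivatives with respect to the complex variable $s$, whereas the asserted identity differentiates with respect to $t$. Since $s=\frac12+it$ along the critical line, the chain rule gives $\frac{{\rm d}}{{\rm d}s}=-i\frac{{\rm d}}{{\rm d}t}$, and I would carry this factor of $-i$ through every step.

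First I would form the logarithmic derivative of \eqref{eqn3moser} with respect to $t$. Writing formally $L(t)=\log\zeta\left(\frac12+it\right)=-i\vartheta(t)+\log Z(t)$ and using $\frac{{\rm d}}{{\rm d}t}\zeta\left(\frac12+it\right)=i\,\zeta'\left(\frac12+it\right)$, I obtain on the one hand $L'(t)=i\,\frac{\zeta'}{\zeta}\left(\frac12+it\right)$ and on the other $L'(t)=-i\vartheta'(t)+\frac{Z'(t)}{Z(t)}$. Comparing the two expressions yields the clean formula
\begin{equation}
\frac{\zeta'}{\zeta}\left(\tfrac12+it\right)=-\vartheta'(t)-i\,\frac{Z'(t)}{Z(t)}.
\end{equation}
Next, from the elementary identity $\frac{{\rm d}}{{\rm d}s}\left(\frac{\zeta'}{\zeta}\right)=\frac{\zeta''}{\zeta}-\left(\frac{\zeta'}{\zeta}\right)^2$ together with $\frac{{\rm d}}{{\rm d}s}=-i\frac{{\rm d}}{{\rm d}t}$, I would record the key relation
\begin{equation}
\frac{\zeta''}{\zeta}-\left(\frac{\zeta'}{\zeta}\right)^2=-i\,\frac{{\rm d}}{{\rm d}t}\left(\frac{\zeta'}{\zeta}\right),
\end{equation}
so that the very combination appearing in Lemma \ref{lemma2.1} is expressed purely as a $t$-derivative, with the $(\zeta'/\zeta)^2$ term eliminated for free.

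I would then substitute. Lemma \ref{lemma2.1} gives $\sum_{\gamma}(t-\gamma)^{-2}=\frac{\zeta''}{\zeta}-\left(\frac{\zeta'}{\zeta}\right)^2+O(1/t)$, and plugging in the two previous displays produces
\begin{equation}
\sum_{\gamma}^{}\frac{1}{(t-\gamma)^2}=-i\,\frac{{\rm d}}{{\rm d}t}\left(-\vartheta'(t)-i\,\frac{Z'(t)}{Z(t)}\right)+O\!\left(\frac1t\right)=-\frac{{\rm d}}{{\rm d}t}\left(\frac{Z'(t)}{Z(t)}\right)+i\,\vartheta''(t)+O\!\left(\frac1t\right).
\end{equation}
The expected main obstacle — really the only point needing external input — is the leftover term $i\vartheta''(t)$. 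Here I would invoke the classical asymptotic expansion of the Riemann--Siegel theta function, $\vartheta(t)=\frac t2\log\frac{t}{2\pi}-\frac t2-\frac\pi8+O(1/t)$, which after two differentiations gives $\vartheta''(t)=\frac{1}{2t}+O(1/t^3)=O(1/t)$. Thus $i\vartheta''(t)$ is absorbed into the error term and the identity \eqref{eqn1JM} follows. As a built-in consistency check, the left-hand side is real (the ordinates $\gamma$ are real on the RH) and $-\frac{{\rm d}}{{\rm d}t}\left(Z'/Z\right)$ is real since $Z$ is real-valued; this forces the spurious imaginary contribution to be negligible, which is exactly what the theta-function estimate confirms.
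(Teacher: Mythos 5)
Your proposal follows essentially the same route as the paper: logarithmically differentiate the factorization $\zeta\left(\frac12+it\right)=e^{-i\vartheta(t)}Z(t)$, apply the quotient-rule identity to convert $\frac{\zeta''}{\zeta}-\left(\frac{\zeta'}{\zeta}\right)^2$ into a derivative of $\frac{Z'(t)}{Z(t)}$, and compare with Lemma \ref{lemma2.1}. In fact your version is the more careful one, since you track the factor $\frac{{\rm d}}{{\rm d}s}=-i\frac{{\rm d}}{{\rm d}t}$ consistently and explicitly justify absorbing $i\vartheta''(t)$ into the $O\left(\frac{1}{t}\right)$ term via the Riemann--Siegel theta asymptotics, a step the paper leaves implicit.
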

\begin{proof}
The logarithmic derivative of 
\begin{equation}
\zeta\left(\frac{1}{2}+it \right)=e^{-i\vartheta(t)}Z(t)
\end{equation}
gives
\begin{equation}\label{eqnmozer2'-1}
\frac{\zeta^\prime\left(\frac{1}{2}+it \right)}{\zeta\left(\frac{1}{2}+it \right)}=\vartheta^\prime(t)-\frac{Z^\prime(t)}{Z(t)}.
\end{equation}
We now apply formula \eqref{eqnzeta2.13} to deduce
\begin{align}\label{eqnmozer2'-2}
&\frac{{\rm d}}{{\rm d}t}\left(\frac{Z^\prime}{Z} \right)=\frac{Z^{''}(t)}{Z(t)}-{\left\{ \frac{Z^\prime(t)}{Z(t)}  \right\}}^2
=i\vartheta^{''}(t)-\frac{{\rm d}}{{\rm d}t}\left(\frac{\zeta^\prime}{\zeta} \right) \\ \nonumber
&=i\vartheta^{''}(t)-\frac{\zeta^{''}\left(\frac{1}{2}+it \right)}{\zeta\left(\frac{1}{2}+it \right)}
+{\left\{\frac{ \zeta^\prime\left(\frac{1}{2}+it \right)}{\zeta\left(\frac{1}{2}+it \right)}\right\}}^2.
\end{align}
We now deduce \eqref{eqn1JM} on comparing \eqref{eqnmozer2'-2} and \eqref{eqn4JM}. 
\end{proof}

\begin{lemma}\textup{(Littlewood's estimate \cite[p. 223]{Ti})}\label{lemma2.3}
 For ordinates $\gamma^\prime<\gamma''$ of consecutive zeros of the Riemann zeta function we have

\begin{equation}\label{eqnlittlewood}
\gamma''-\gamma^\prime<\frac{A}{\log \log \log \gamma^\prime}.
\end{equation}
\end{lemma}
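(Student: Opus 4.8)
The plan is to derive the gap estimate from the Riemann--von Mangoldt counting formula combined with a bound on the argument of $\zeta$ on the critical line. Let $N(T)$ denote the number of non-trivial zeros with ordinate in $(0,T]$, so that
\[
N(T)=\frac{T}{2\pi}\log\frac{T}{2\pi}-\frac{T}{2\pi}+\frac{7}{8}+S(T)+O\!\left(\frac{1}{T}\right),\qquad S(T)=\frac{1}{\pi}\arg\zeta\!\left(\tfrac12+iT\right),
\]
where the argument is defined by continuous variation along the usual broken path. Since $N$ is a non-decreasing integer-valued function that increases by the multiplicity at each zero, it is enough to produce, for a suitable $h=h(\gamma')$ of the claimed size, the inequality $N(\gamma'+h)-N(\gamma')\ge 1$: this forces the existence of a zero in $(\gamma',\gamma'+h]$, and choosing $\gamma''$ to be the first such zero gives $\gamma''-\gamma'\le h$.

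First I would difference the formula over $[\gamma',\gamma'+h]$, obtaining
\[
N(\gamma'+h)-N(\gamma')=\frac{h}{2\pi}\log\frac{\gamma'}{2\pi}+\bigl(S(\gamma'+h)-S(\gamma')\bigr)+O\!\left(\frac{h^{2}}{\gamma'}+\frac{1}{\gamma'}\right).
\]
Taking $h\asymp 1/\log\log\log\gamma'$ makes the leading term of order $\log\gamma'/\log\log\log\gamma'$, which tends to infinity; hence the count is at least $1$ as soon as this leading term strictly dominates the fluctuation $|S(\gamma'+h)-S(\gamma')|$ together with the error. The entire force of the lemma is therefore concentrated in an upper bound for the variation of $S$.

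This control of $S$ is the main obstacle, and it is precisely the hard classical input. I would invoke Littlewood's estimate that, on the RH, $S(t)=O\!\left(\log t/\log\log t\right)$, whence $|S(\gamma'+h)-S(\gamma')|=O(\log\gamma'/\log\log\gamma')$. Comparing the two orders, the leading term $\asymp\log\gamma'/\log\log\log\gamma'$ exceeds the fluctuation $\asymp\log\gamma'/\log\log\gamma'$ for every fixed $A>0$ once $\gamma'$ is large, because $\log\log\log\gamma'=o(\log\log\gamma')$. This yields $\gamma''-\gamma'<A/\log\log\log\gamma'$ with room to spare, so the stated rate follows a fortiori. The genuinely difficult ingredient is the bound on $S(t)$ itself: its proof proceeds through Littlewood's lemma, which expresses $\arg\zeta\!\left(\tfrac12+it\right)$ in terms of $\int\log|\zeta|$ along a vertical segment, and then estimates $\zeta$ throughout the strip --- all of the analytic difficulty resides here.

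Finally, I would note the consistency with Lemma \ref{lemma2.1}: the positivity of $\sum_{\gamma}(t-\gamma)^{-2}$ shows that $Z'/Z$ is strictly decreasing between consecutive zeros, accounting for the single stationary point of $Z$ guaranteed by Rolle's theorem. This monotonicity alone, however, does not bound the length of the gap; the external estimate on $\arg\zeta$ remains indispensable.
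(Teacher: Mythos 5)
The paper does not actually prove this lemma: it is quoted from Titchmarsh with the bare reference \cite[p.\ 223]{Ti}, so there is no internal argument to compare yours against. Your sketch is a legitimate and essentially complete conditional derivation. Differencing the Riemann--von Mangoldt formula over $[\gamma',\gamma'+h]$ and controlling the fluctuation by Littlewood's RH bound $S(t)=O(\log t/\log\log t)$ does force $N(\gamma'+h)-N(\gamma')\ge 1$ once $h\log\gamma'$ dominates $\log\gamma'/\log\log\gamma'$, hence a zero in $(\gamma',\gamma'+h]$, and you are right that the entire analytic weight sits in the bound for $S$, which you correctly flag as the black-boxed classical input. Two remarks on the comparison with the source. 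First, the result Titchmarsh proves at the cited place (Theorem~9.12) is \emph{unconditional}: every interval $(T,\,T+A/\log\log\log T)$ contains the ordinate of a zero in the critical strip, and its proof is a different and more delicate argument, necessarily so because the unconditional bound $S(t)=O(\log t)$ is too weak for your differencing step to close; your route needs the RH, which is harmless here since the whole paper assumes it. Second, your argument in fact delivers the stronger conditional gap bound $\gamma''-\gamma'=O(1/\log\log\gamma')$, of which the stated $A/\log\log\log\gamma'$ is an a fortiori consequence; fed into \eqref{eqn22moser} this would even sharpen Theorem~1 to a lower bound of order $(\log\log t_0)^2$, an improvement the paper does not pursue. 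Your closing observation that Lemma~\ref{lemma2.1} makes $Z'/Z$ strictly decreasing between consecutive zeros is correct but, as you note yourself, contributes nothing to bounding the length of the gap.
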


\begin{lemma}\textup{(Titchmarsh $\Omega$- theorem \cite[p. 201]{Ti})}\label{lemma2.4}
There exists a sequence $\{{t}_0\}$  such that
\begin{equation}\label{rightomegatheorem}
|Z({t}_0)|>A_2\exp\left(\log^\beta {t}_0  \right), \quad \beta<\frac{1}{2}.
\end{equation}
\end{lemma}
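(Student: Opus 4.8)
Since $\zeta\left(\frac{1}{2}+it\right)=e^{-i\vartheta(t)}Z(t)$ with $\left|e^{-i\vartheta(t)}\right|=1$, we have $|Z(t)|=\left|\zeta\left(\frac{1}{2}+it\right)\right|$, so the claim is equivalent to producing arbitrarily large $t_0$ at which $\left|\zeta\left(\frac{1}{2}+it_0\right)\right|$ exceeds $A_2\exp\left(\log^\beta t_0\right)$. The plan is to manufacture large values of $\log\left|\zeta(s)\right|=\operatorname{Re}\log\zeta(s)$ just to the right of the critical line and then pass to the line itself. The engine is the Euler-product expansion
\[
\operatorname{Re}\log\zeta(\sigma+it)=\sum_{p}\sum_{k\ge1}\frac{\cos(kt\log p)}{k\,p^{k\sigma}},
\]
whose prime-power terms with $k\ge2$ are harmless; on the RH this representation can be continued, with a controlled error coming from the (on-line) zeros, down to $\sigma=\frac12+\delta$ for small $\delta>0$, so that $\log\left|\zeta\left(\frac12+\delta+it\right)\right|$ is well approximated by the prime sum $\sum_{p\le y}p^{-\sigma}\cos(t\log p)$ for a suitable cut-off $y$.

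The heart of the matter is a resonance step. Applying Dirichlet's box principle to the $\pi(y)$ quantities $\left\{\frac{t\log p}{2\pi}\right\}$, $p\le y$, one finds in an interval $\left[T,\,T\exp\!\left(c\,\pi(y)\right)\right]$ a value $t$ for which $\cos(t\log p)$ is simultaneously close to $1$ for every prime $p\le y$. For such a $t$,
\[
\log\left|\zeta\left(\tfrac12+\delta+it\right)\right|\gg\sum_{p\le y}p^{-1/2-\delta},
\]
and, by partial summation with the prime number theorem, the right-hand side is of order $y^{1/2-\delta}/\log y$. Choosing $\delta\asymp1/\log y$ keeps the factor $y^{-\delta}$ bounded, so this is $\gg y^{1/2}/\log y$. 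Since the admissible $t$ satisfies $\log t\asymp\pi(y)\asymp y/\log y$, whence $y\asymp\log t\,\log\log t$, we obtain
\[
\log\left|\zeta\left(\tfrac12+\delta+it\right)\right|\gg\frac{(\log t)^{1/2}}{(\log\log t)^{1/2}}\gg\left(\log t\right)^{\beta}
\]
for any fixed $\beta<\tfrac12$, which is already of the required shape.

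It remains to shift this large value from $\sigma=\frac12+\delta$ onto $\sigma=\frac12$. For this I would apply a Borel--Carath\'{e}odory (or Hadamard three-circles) estimate to the analytic function $\log\zeta$ on a small disc near $\tfrac12+\delta+it$, using the convexity bound $\left|\zeta(\sigma+it)\right|\ll t^{A}$ on the right as the ambient upper bound; because both $\delta$ and the disc radius are $o(1)$, the resulting loss in $\log\left|\zeta\right|$ is of smaller order than the main term, and the lower bound survives at a nearby point $\tfrac12+it_0$. Letting $T\to\infty$ yields the infinite sequence $\{t_0\}$. I expect the principal obstacle to be the resonance bookkeeping: quantifying ``$\cos(t\log p)\approx1$ for all $p\le y$'' while keeping $t$ as small as possible, since it is exactly this trade-off between the number of resonated primes and the size of $t$ that pins down the exponent and forces $\beta<\tfrac12$; the approximation near the line (legitimate on the RH) and the convexity transfer are comparatively routine. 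The statement itself may simply be quoted from \cite[p.~201]{Ti}, the scheme above being the substance of that reference.
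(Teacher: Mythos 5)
The paper offers no proof of this lemma at all: it is quoted verbatim from Titchmarsh \cite[p.~201]{Ti} (Theorem 8.12 there), so the only fair comparison is with the argument in that reference. Your sketch is essentially that argument: Dirichlet's box principle applied to the fractional parts $\{t\log p/2\pi\}$ to resonate $\cos(t\log p)$ near $1$ for all $p\le y$, the RH-conditional approximation of $\log\zeta$ by a short prime sum just to the right of the critical line, and the bookkeeping $y\asymp\log t\,\log\log t$ giving the exponent $(\log t)^{1/2}(\log\log t)^{-1/2}\gg(\log t)^{\beta}$ for any $\beta<\tfrac12$. Two remarks: Titchmarsh's theorem is actually unconditional, because the case in which RH fails is handled separately (an off-line zero forces even larger values); your RH-only branch is the relevant one here since the whole paper works on the RH.

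The one step that, as written, would not go through is the final transfer from $\sigma=\tfrac12+\delta$ to the line itself. A large value of $|\zeta|$ at an interior point, combined with Borel--Carath\'eodory or three circles and an ambient upper bound, tells you that the maximum of $|\zeta|$ over a surrounding circle (or rectangle boundary) is large; it does not localize that maximum on the portion of the boundary lying on $\sigma=\tfrac12$, since the rest of the boundary sits in $\sigma>\tfrac12$ where $\zeta$ is equally entitled to be large. The correct logic is the contrapositive: assume $|Z(t)|\le A_2\exp(\log^{\beta}t)$ for \emph{all} large $t$, propagate this uniform bound on the line $\sigma=\tfrac12$ rightward to $\sigma=\tfrac12+\delta$ by Phragm\'en--Lindel\"of (using the trivial bound on $\sigma=2$ as the other edge), and contradict the resonance lower bound. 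This is exactly how the cited source closes the argument, and with that repair your outline matches it; labelling the transfer ``comparatively routine'' undersells the only place where the direction of the inequality genuinely matters.
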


The basis of our construction of infinitely many universes is the postulate \eqref{eqn2.16moser}
{\it i.e.} the radius of the universe coincides with the absolute value of the Riemann zeta-function $\left|\zeta\left(\frac{1}{2}+it  \right)\right|$, 
with spherical geometry being assumed.\\

Note that logarithmic differentiation applied to \eqref{eqn2.16moser} gives
$\log R(t)=\log |Z(t)|$ which implies $\frac{R^{\prime }(t)}{R(t)}=\frac{Z^{\prime }(t)}{Z(t)}$ and hence, \eqref{eqn8moser} takes the form
\begin{equation}\label{eqn7'moser}
\frac{\mathcal G c^2}{3}\rho(t)=\frac{c^2}{Z^2(t)}+{\left\{\frac{Z^\prime(t)}{Z(t)} \right\}}^2.
\end{equation}
We also have the simple relation between \eqref{eqn7moser} and \eqref{eqn8moser}
\begin{equation}\label{eqn21}
\mathcal G p =-2\frac{R''}{R}-\mathcal G \frac{c^2}{3}\rho.
\end{equation}
By Lemma \ref{lemma2.2} for $t=t_0$, we obtain
\begin{equation}
\mathcal G p +\mathcal G \frac{c^2}{3}\rho=-2\frac{R''}{R}=2\sum_{\gamma}^{}\frac{1}{{(t_0-\gamma)}^2}-2{\left\{\frac{Z^\prime(t_0)}{Z(t_0)} \right\}}^2+O\left(\frac{1}{t_0} \right)
\end{equation}
or
\begin{equation}\label{eqn22}
\frac{\mathcal G}{2} (p+ c^2\rho)=2\sum_{\gamma}^{}\frac{1}{{(t_0-\gamma)}^2}-\frac{1}{c^2}\frac{1}{{Z(t_0)}^2}+O\left(\frac{1}{t_0} \right).\\
\end{equation} 

Now we need to find an interval of time $t$ for which the physical conditions \eqref{eqn11moser} are satisfied. By Lemma \ref{lemma2.3} we obtain
\begin{equation}\label{eqn22moser}
\sum_{\gamma}^{}\frac{1}{{(t_0-\gamma)}^2}>\frac{1}{{(\gamma^{''}-\gamma^\prime)}^2}>A_1{(\log \log\log t_0)}^2.
\end{equation}
From \eqref{eqn22moser} and Lemma \ref{lemma2.4} we deduce that in a small neighborhood $[{t}_0-\delta({t}_0),~{t}_0+\delta({t}_0)]$ of the point ${t}_0$ we have
 \begin{equation}\label{eqnzeta2.17}
\frac{\mathcal G}{2} (p+ c^2\rho)>A_1{(\log \log\log t_0)}^2-A_2^\prime\exp\left(-\log^\beta {t}_0  \right)+O\left(\frac{1}{t_0} \right).
 \end{equation}
Since for large values of ${t}_0$, the first term on the right supersedes the other two terms, we conclude Theorem 1. By the inequality between $p$ and $\rho$, we conclude that $p(t)>0$ for infinitely 
many small intervals $[t_0-\delta(t_0),t_0+\delta(t_0)]$, whence Corollary 1 follows.



\section{Acknowledgement}
The author acknowledges Shigeru Kanemitsu for the stimulating ideas and support and also Faculty of Humanity-oriented Science and Technology, Kindai University, Iizuka, Fukuoka to sponsor her the visit in order to execute this work. 
She also thanks Kalyan Chakraborty from HRI, India for fruitful discussions.

\end{document}